\title{\sc Polymatroids and polyquantoids\footnotemark}
\author{{\bf Franti\v{s}ek Mat\'{u}\v{s}}\\[2ex]
         \small Institute of Information Theory and Automation \\
         \small Academy of Sciences of the Czech Republic\\
         \small {\tt matus@utia.cas.cz}    }
\date{}
\newfont{\m}{cmr8}\newfont{\ms}{cmsl8}
    \theoremstyle{plain}  \newtheorem{lemma}{Lemma}
                          \newtheorem{theorem}{Theorem}
                          \newtheorem{corollary}{Corollary}
    \theoremstyle{definition}  
    \theoremstyle{remark}      \newtheorem{remark}{Remark}
   \DeclareMathAlphabet{\mathsfsl}{OT1}{cmss}{m}{sl}
   \DeclareMathAlphabet{\mathbfsl}{OT1}{cmr}{bx}{it}
      \renewcommand{\geq}{\geqslant}\renewcommand{\leq}{\leqslant}
      \newcommand{\pmn}{\emptyset}\newcommand{\pdm}{\subseteq}\newcommand{\sm}{\setminus}
    \newcommand{\hatt}{^{\scriptscriptstyle\wedge}}\newcommand{\chee}{^{\scriptscriptstyle\vee}}
    \newcommand{\nnn}{{\mathsfsl0}}
        \newcommand{\vre}{\phi^*}\newcommand{\zje}{\mbox{$\bigcup$}}
        \newcommand{\exd}{^{\scriptscriptstyle \#}}
\begin{document}
\maketitle
    \footnotetext{This work was supported by Grant Agency the Czech Republic under Grant 201/08/0539.}
    \begin{abstract}
       When studying entropy functions of multivariate probability distributions,
       polymatroids and matroids emerge. Entropy functions of pure multiparty quantum
       states give rise to analogous notions, called here polyquantoids and quantoids.
       Polymatroids and polyquantoids are related via linear mappings and duality.
       Quantum secret sharing schemes that are ideal are described by selfdual matroids.
       Expansions of integer polyquantoids to quantoids are studied and linked to that
       of polymatroids.
    \end{abstract}
\begin{picture}(0,0)
    \put(0,290){\makebox(0,0)[l]{\small  \emph{Proceedings of WUPES 2012}, Sept.\ 12--15, Mari\'ansk\'e L\'azn\v{e},  
                                        Czech Republic, 126--136.}}
\end{picture}

\thispagestyle{empty}

\section{Introduction\label{S:intro}}

 A polymatroid $(N,h)$ consists of a finite ground set $N$ and rank
 function $h$ on the subsets of $N$ that is normalized $h(\pmn)=0$,
 nondecreasing  \mbox{$h(I)\leq h(J)$}, $I\pdm J$, and submodular
 $h(I)+h(J)\geq h(I\cup J)+h(I\cap J)$, $I,J\pdm N$. A polymatroid is
 entropic if there exists a probability measure $P$ on a finite set
 $\prod_{i\in N}\: X_i$ such that $h(I)$ equals Shannon entropy of the
 marginal of $P$ to $\prod_{i\in I}\: X_i$, for all \mbox{$I\pdm N$}.
 This means that $h$ equals the entropy function of $P$. These functions
 always induce polymatroids.

 In this work, a \emph{polyquantoid} is introduced as a pair $(N,e)$
 with a rank function $e$ on the subsets of $N$ that is normalized,
 comple\-mentary $e(I)=e(N\sm I)$, $I\pdm N$, and submodular. A polyquantoid
 is entropic if there exists a quantum state $\rho$ on a complex Hilbert space
 $\bigotimes_{i\in N}\: H_i$ of finite dimension such that $e(I)$ equals von
 Neumann entropy of the reduction of $\rho$ to $\bigotimes_{i\in I}\: H_i$,
 for all $I\pdm N$. This means that $e$ equals the entropy function on $\rho$.
 These functions always induce polyquantoids, by properties of von Neumann
 entropy.

 A polymatroid/polyquantoid is integer if all values of its rank function
 are integer numbers. An integer polymatroid whose values on singletons
 equal zero or one is called matroid. Let \emph{quantoid} be defined as
 an integer polyquantoid with this property.

 This contribution studies interplay between polymatroids, polyquantoids,
 matroids, quantoids and secret sharing schemes, both classical and quantum.
 In Section~\ref{S:dual}, duality of set functions is worked out.
 Section~\ref{S:relation} introduces mutually inverse linear mappings
 that provide a one-to-one correspondence between tight selfdual
 polymatroids and polyquantoids, see Theorem~\ref{T:tamaspat}. This
 correspondence can serve as a tool for comparing problems on
 classical and quantum entropy functions.

 In Section~\ref{S:isss}, secret sharing schemes are lifted to the level
 of poly\-matroids/poly\-quan\-toids.  Theorem~\ref{T:polymatroidsss} recalls
 that the ideal sharing in polymatroids is governed by matroids. This result
 is translated to polyquantoids in Theorem~\ref{T:polyquantoidsss} that
 describes the ideal quantum sharing via those quantoids that correspond
 to tight selfdual matroids.

 Section~\ref{S:expan} departs from the notion of expansions of integer
 polymatroids to matroids. An analogous construction for integer polyquantoids
 is introduced to provide expansions of polyquantoids to quantoids, see
 Theorem~\ref{T:expan.seldual.polyquant}. Thus, the quantoids play a role
 of matroids in quantum settings. In Section~\ref{S:disc}, remarks and
 discussion of related material and literature are collected.

\section{Duality\label{S:dual}}

 For set functions $h$ with a ground set $N$, the following definition
 \[
     h'(I)\triangleq h(N\sm I) + h(\pmn) -h(N)
            +\sum_{i\in I}\big[h(i)-h(\pmn)+h(N)-h(N\sm i)\big]\,,\quad I\pdm N\,,
 \]
 gives rise to a duality mapping $h\mapsto h'$. A function $h$ is \emph{selfdual}
 if $h'=h$. The functions that are comple\-mentary, as in polyquantoids, are selfdual.

 Let us say that a set function $h$ is \emph{tight} if $h(N\sm i)=h(N)$
 for all $i\in N$. If $h$ is normalized and tight then the definition
 of duality simplifies to
 \[
    h'(I)=h(N\sm I)-h(N) +\sum_{i\in I} h(i)\,,\quad I\pdm N\,.
 \]

\begin{lemma}\label{L:dual}
    For any function $h$ on the subsets of $N$,\\
    \indent (i) $h'(\pmn)=h(\pmn)$,\\
    \indent (ii) $h'(i)=h(i)$ for $i\in N$,\\
    \indent (iii) $h'(N)-h'(N\sm i)=h(N)-h(N\sm i)$ for $i\in N$,\\
    \indent (iv) $h''=h$,\\
    \indent (v) $h$ is submodular if and only if $h'$ is so,\\
    \indent (vi) if $h$ is normalized, submodular and $h(N)\geq h(N\sm i)$,
                    $i\in N$, then $h'$ is nondecreasing.
\end{lemma}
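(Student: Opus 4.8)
The plan is to dispatch (i)--(iii) by direct substitution, bootstrap (iv) from them, and obtain (v), (vi) from submodularity of $h$ together with elementary complementation identities. Throughout it is convenient to abbreviate $\delta_i\triangleq h(i)-h(\pmn)+h(N)-h(N\sm i)$ for $i\in N$, so that the duality reads $h'(I)=h(N\sm I)+h(\pmn)-h(N)+\sum_{i\in I}\delta_i$; write $\delta'_i$ for the same expression formed from $h'$.

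For (i) set $I=\pmn$: the sum is empty and $h(N)+h(\pmn)-h(N)=h(\pmn)$. For (ii) set $I=\{i\}$: the single summand $\delta_i$ is designed to cancel everything but $h(i)$. For (iii) write out $h'(N)$ and $h'(N\sm i)$ and subtract, whereupon all $\delta_j$ with $j\neq i$ cancel and the rest collapses to $h(N)-h(N\sm i)$. Read together, (i)--(iii) say exactly that $\delta'_i=\delta_i$ for all $i$ and that $h'(N)=2h(\pmn)-h(N)+\sum_{j\in N}\delta_j$. For (iv) expand $h''(I)=h'(N\sm I)+h'(\pmn)-h'(N)+\sum_{i\in I}\delta_i$, substitute $h'(N\sm I)=h(I)+h(\pmn)-h(N)+\sum_{i\in N\sm I}\delta_i$ from the formula for $h'$, and cancel: the copy of $\sum_{j\in N}\delta_j$ coming from $-h'(N)$ is annihilated by $\sum_{i\in N\sm I}\delta_i+\sum_{i\in I}\delta_i$, and the scalar terms cancel, leaving $h''(I)=h(I)$.

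For (v), by (iv) it is enough to prove that $h$ submodular forces $h'$ submodular. In the formula for $h'$ the term $\sum_{i\in I}\delta_i$ and the constant $h(\pmn)-h(N)$ are modular in $I$, hence irrelevant to the submodular inequality, while $I\mapsto h(N\sm I)$ inherits submodularity from $h$ since $(N\sm I)\cup(N\sm J)=N\sm(I\cap J)$ and $(N\sm I)\cap(N\sm J)=N\sm(I\cup J)$. For (vi) it suffices to check $h'(I)\le h'(I\cup j)$ for $j\in N\sm I$, and the formula gives $h'(I\cup j)-h'(I)=h((N\sm I)\sm j)-h(N\sm I)+\delta_j$. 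Submodularity applied to $N\sm I$ and $N\sm j$, whose union is $N$ and intersection $(N\sm I)\sm j$, yields $h((N\sm I)\sm j)-h(N\sm I)\ge h(N\sm j)-h(N)$; inserting this leaves $h'(I\cup j)-h'(I)\ge h(j)-h(\pmn)$. A second use of submodularity, on $\{j\}$ and $N\sm j$, gives $h(j)-h(\pmn)\ge h(N)-h(N\sm j)$, which is $\ge 0$ by hypothesis.

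No step is genuinely deep; the only places calling for attention are the cancellation in (iv) --- where one must keep track of the duality-invariant modular correction $\sum_{i}\delta_i$ appearing twice with opposite signs --- and the choice in (vi) of the two instances of the submodular inequality that together absorb the $\delta_j$ term. Extracting the invariance $\delta'_i=\delta_i$ from (i)--(iii) up front is what keeps (iv) and (v) to a couple of lines each.
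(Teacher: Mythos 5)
Parts (i)--(v) of your proposal are correct and follow essentially the paper's own route: your observation $\delta'_i=\delta_i$ is just (i)--(iii) packaged before (iv), and your (v) is the same modular-part argument as in the paper. The flaw is in (vi). Submodularity applied to $N\sm I$ and $N\sm j$ (union $N$, intersection $(N\sm I)\sm j$) reads $h(N\sm I)+h(N\sm j)\ge h(N)+h((N\sm I)\sm j)$, i.e.\ $h((N\sm I)\sm j)-h(N\sm I)\le h(N\sm j)-h(N)$ --- the \emph{reverse} of the inequality you assert. Your intermediate conclusion $h'(I\cup j)-h'(I)\ge h(j)-h(\pmn)$ cannot be right either: by (v) $h'$ is submodular, so $h'(I\cup j)-h'(I)\le h'(j)-h'(\pmn)=h(j)-h(\pmn)$, and equality for all $I$ already fails for the rank function of the uniform matroid $U_{1,2}$ on $N=\{1,2\}$ (take $I=\{2\}$, $j=1$: the increment of $h'$ is $0$, not $1$); the same example refutes your first inequality directly.

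The step is locally repairable by choosing the other instance of submodularity: apply it to the disjoint pair $N\sm(I\cup j)$ and $\{j\}$, which gives $h((N\sm I)\sm j)+h(j)\ge h(N\sm I)+h(\pmn)$, hence, with $\delta_j=h(j)-h(\pmn)+h(N)-h(N\sm j)$, $h'(I\cup j)-h'(I)=h((N\sm I)\sm j)-h(N\sm I)+\delta_j\ge h(N)-h(N\sm j)\ge 0$ by hypothesis. This is exactly the paper's estimate in aggregated form: it first bounds $h'(J)-h'(K)\ge h(N\sm J)-h(N\sm K)+\sum_{i\in J\sm K}h(i)$ for $K\pdm J$ (using $h(\pmn)=0$ and $h(N)\ge h(N\sm i)$, so $\delta_i\ge h(i)$) and then uses submodularity with normalization to see that this right-hand side is nonnegative. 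Fix the direction of that one inequality and the rest of your argument stands.
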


\begin{proof}
 First two assertions follow directly from the definition. For $K\pdm J$ the equality
 \[
    h'(J)-h'(K)=h(N\sm J)-h(N\sm K)
                    +\sum_{i\in J\sm K}\big[h(i)-h(\pmn)+h(N)-h(N\sm i)\big]
 \]
 implies \emph{(iii)}. Choosing $K=N\sm I$ and $J=N$, it rewrites to
 \[
    h(I)=h'(N\sm I) + h(\pmn) -h'(N)
            +\sum_{i\in I}[h(i)-h(\pmn)+h(N)-h(N\sm i)\big]\,,\qquad I\pdm N\,.
 \]
 By \emph{(i)}, \emph{(ii)} and \emph{(iii)}, the right-hand side equals $h''(I)$
 which proves \emph{(iv)}. If $h$ is submodular then $I\mapsto h(N\sm I)$ is so whence
 $h'$ is submodular. Then, the equivalence \emph{(v)} holds by~\emph{(iv)}. If $h$
 is normalized and $h(N)\geq h(N\sm i)$, $i\in N$, then for $J\supseteq K$
 \[
    h'(J)-h'(K)\geq h(N\sm J)-h(N\sm K)+\sum_{i\in J\sm K} h(i)\,.
 \]
 If $h$ is also submodular then the right-hand side is nonnegative whence \emph{(vi)} holds.
\end{proof}

\begin{corollary}\label{C:dual}
     The duality mapping restricts to an involution on the (tight) polymatroids.
\end{corollary}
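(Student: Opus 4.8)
The plan is to harvest the pieces already assembled in Lemma~\ref{L:dual}. Let $(N,h)$ be a polymatroid; I want to check that $h'$ is again a polymatroid with the same ground set $N$, after which $h''=h$ from \emph{(iv)} immediately shows that $h\mapsto h'$ is an involution on the class of polymatroids. That $h'$ is a set function on the subsets of $N$ is clear from the defining formula. Normalization is \emph{(i)}: $h'(\pmn)=h(\pmn)=0$. Submodularity of $h'$ is exactly \emph{(v)} applied to the submodular $h$. For monotonicity I note that a polymatroid is nondecreasing, hence $h(N)\ge h(N\sm i)$ for every $i\in N$, so the hypotheses of \emph{(vi)} are met and $h'$ is nondecreasing. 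This settles the polymatroid case.

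For the parenthetical ``tight'' version, suppose in addition that $h$ is tight, i.e.\ $h(N\sm i)=h(N)$ for all $i\in N$. I must verify that $h'$ is tight as well. But \emph{(iii)} gives $h'(N)-h'(N\sm i)=h(N)-h(N\sm i)=0$ for each $i\in N$, so $h'(N\sm i)=h'(N)$ and $h'$ is tight. Together with the previous paragraph and \emph{(iv)}, the duality mapping thus restricts to an involution on the tight polymatroids too.

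I do not anticipate any real obstacle: each required property of $h'$ has a one-line derivation from the corresponding clause of Lemma~\ref{L:dual}. The only mild point to keep in mind is that ``nondecreasing'' already supplies precisely the inequality $h(N)\ge h(N\sm i)$ needed to invoke \emph{(vi)}, so no extra hypothesis on $h$ is required, and that the tightness of $h$ is used only through \emph{(iii)} to transfer tightness to $h'$.
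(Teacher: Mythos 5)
Your proof is correct and is exactly the argument the paper intends: the corollary is stated without a separate proof precisely because it follows from Lemma~\ref{L:dual} in the way you spell out, using \emph{(i)}, \emph{(v)}, \emph{(vi)} to see that $h'$ is a polymatroid, \emph{(iii)} to preserve tightness, and \emph{(iv)} for the involution property.
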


\section{Tight selfdual polymatroids and polyquantoids\label{S:relation}}

 Let $h$ and $e$ be set functions with the ground set $N$. The linear
 mappings $e\mapsto e\hatt$ and $h\mapsto h\chee$ introduced here by
 \[
     e\hatt(I)\triangleq e(I)+\sum_{i\in I} e(i)\quad\text{and}\quad
     h\chee(I)\triangleq h(I) -\tfrac12 \sum_{i\in I} h(i)\,,\quad I\pdm N\,,
 \]
 are mutually inverse, $(e\hatt)\chee=e$ and $(h\chee)\hatt=h$. They
 provide a natural link between the polymatroids and polyquantoids.

\begin{theorem}\label{T:tamaspat}
     The mappings $e\mapsto e\hatt$ and $h\mapsto h\chee$ restrict to mutually
     inverse bijections between the polyquantoids and the tight selfdual polymatroids.
\end{theorem}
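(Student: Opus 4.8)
The plan is to use the fact, noted just above, that $e\mapsto e\hatt$ and $h\mapsto h\chee$ are mutually inverse bijections on the space of all set functions on $N$. It therefore suffices to check the two one-directional inclusions: $e\hatt$ is a tight selfdual polymatroid for every polyquantoid $e$, and $h\chee$ is a polyquantoid for every tight selfdual polymatroid $h$. Restricting the two mutually inverse maps to the respective classes then automatically yields the asserted bijections.

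First I would treat the direction $e\mapsto h\triangleq e\hatt$. Normalization $h(\pmn)=0$ and submodularity of $h$ are immediate, the latter since $h$ is a sum of the submodular $e$ and a modular singleton term. For tightness I would use the two easy consequences of complementarity, $e(N\sm i)=e(i)$ and $e(N)=e(\pmn)=0$, to obtain $h(N\sm i)=\sum_{j\in N}e(j)=h(N)$. Being normalized and tight, $h$ satisfies the simplified duality formula from Section~\ref{S:dual}; substituting $h(i)=2e(i)$ and $h(N\sm I)=e(N\sm I)+\sum_{j\notin I}e(j)$ into it, the singleton sums cancel and complementarity of $e$ collapses the result back to $h(I)$, so $h'=h$. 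Nondecreasingness of $h$ does not come for free here; instead I would invoke Lemma~\ref{L:dual}(vi) — applicable since $h$ is normalized, submodular and satisfies $h(N)\geq h(N\sm i)$ by tightness — to conclude that $h'=h$ is nondecreasing. Hence $(N,h)$ is a tight selfdual polymatroid.

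For the direction $h\mapsto e\triangleq h\chee$, normalization and submodularity are again clear, and the only real point is complementarity $e(I)=e(N\sm I)$. I would derive it from selfduality of $h$: written out for normalized tight $h$, the identity $h'=h$ reads $h(I)=h(N\sm I)-h(N)+\sum_{i\in I}h(i)$ for all $I\pdm N$, and the instance $I=N$ gives $2h(N)=\sum_{i\in N}h(i)$. Combining the two yields $h(I)-h(N\sm I)=\tfrac12\sum_{i\in I}h(i)-\tfrac12\sum_{i\notin I}h(i)$, which rearranges exactly to $e(I)=e(N\sm I)$. Thus $(N,e)$ is a polyquantoid.

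The routine parts are the normalization and submodularity checks; the substantive steps are the two cancellation identities driven by the tight-normalized form of the duality map (together with $e(N)=0$ in one direction and $2h(N)=\sum_{i\in N}h(i)$ in the other). The only mild subtlety I anticipate is that nondecreasingness of $e\hatt$ is not visible directly and must be routed through Lemma~\ref{L:dual}(vi) once selfduality has been established.
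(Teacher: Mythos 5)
Your proposal is correct, and its overall structure is the one the paper uses: verify that $e\hatt$ is a tight selfdual polymatroid, that $h\chee$ is a polyquantoid, and let the already-noted identities $(e\hatt)\chee=e$, $(h\chee)\hatt=h$ supply the bijection. The tightness computation, the cancellation giving $(e\hatt)'=e\hatt$, and the derivation of complementarity of $h\chee$ from $2h(N)=\sum_{i\in N}h(i)$ are exactly the paper's calculations. The one place you diverge is monotonicity of $e\hatt$: the paper proves it directly inside the theorem, by iterating submodularity to get $e(N\sm I)\leq e(N\sm J)+\sum_{i\in J\sm I}e(i)$ for $I\pdm J$ and converting via complementarity to $e\hatt(I)\leq e\hatt(J)$, whereas you first establish tightness and selfduality and then invoke Lemma~\ref{L:dual}\emph{(vi)} to conclude $h'=h$ is nondecreasing. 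Your ordering is sound (selfduality needs only normalization, tightness and complementarity, none of which presuppose monotonicity), and it buys a shorter argument by reusing the Section~\ref{S:dual} machinery; the paper's direct inequality keeps the proof self-contained and makes explicit that monotonicity of $e\hatt$ needs nothing beyond submodularity and complementarity of $e$.
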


\begin{proof}
 Let $(N,e)$ be a polyquantoid. Since $e$ is normalized $e\hatt(\pmn)=0$.
 The submodularity of $e$ is equivalent to that of $e\hatt$, and
 implies $e(N\sm I)\leq e(N\sm J)+\sum_{i\in J\sm I} e(i)$ for $I\pdm J\pdm N$.
 By complementarity, $e(I)\leq e(J)+\sum_{i\in J\sm I} e(i)$, and thus
 $e\hatt(I)\leq e\hatt(J)$. Therefore, $(N,e\hatt)$ is a polymatroid.
 Since $e$ is normalized and complementary
 \[
     e\hatt(N)=\sum_{j\in N} e(j)=e(N\sm i)+\sum_{j\in N\sm i} e(j)=e\hatt(N\sm i)
     \,,\quad i\in N\,,
 \]
 thus $e\hatt$ is tight. For $I\pdm N$ it follows that
 \[\begin{split}
    (e\hatt)'(I)&=e\hatt(N\sm I)- e\hatt(N) + \sum_{i\in I} e\hatt(i)\\
                &=\Big[e(N\sm I)+\sum_{i\in N\sm I}e(i)\Big]
                    -\sum_{i\in N}e(i) + 2\sum_{i\in I}e(i)=e\hatt(I)\,,
 \end{split}\]
 thus $e\hatt$ is selfdual.

 Let $(N,h)$ be a tight selfdual polymatroid. Since $h$ is normalized $h\chee(\pmn)=0$.
 Since $h$ is tight and selfdual $h(I)=h(N\sm I)-h(N) +\sum_{i\in I} h(i)$, $I\pdm N$.
 Then, $h(N)$ is equal to $\tfrac12 \sum_{i\in N} h(i)$. It follows that
 \[
     h\chee(N\sm I)=\Big[h(I)-h(N)+\sum_{i\in N\sm I} h(i)\Big] -\tfrac12 \sum_{i\in N\sm I} h(i)
                    =h\chee(I)\,,\quad I\pdm N\,,
 \]
 thus $h\chee$ is complementary. The submodularity of $h$ implies that of~$h\chee$.
 Therefore, $(N,h\chee)$ is a polyquantoid.
\end{proof}

\begin{remark}
 The above proof provides also arguments for the assertion that the mappings $e\mapsto e\hatt$
 and $h\mapsto h\chee$ restrict to mutually inverse bijections between the class of normalized
 complementary functions and the class of normalized tight selfdual functions, dropping
 submodularity in Theorem~\ref{T:tamaspat}.
\end{remark}

\begin{corollary}\label{C:integer}
     The mappings $e\mapsto e\hatt$ and $h\mapsto h\chee$ induce mutually inverse bijections
     between the integer polyquantoids and the integer tight selfdual polymatroids whose values
     on all singletons are even.
\end{corollary}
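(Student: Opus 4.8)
The plan is to build directly on Theorem~\ref{T:tamaspat}, which already establishes that $e\mapsto e\hatt$ and $h\mapsto h\chee$ are mutually inverse bijections between the polyquantoids and the tight selfdual polymatroids. It therefore remains only to check that, under these bijections, the integer polyquantoids are carried exactly onto the integer tight selfdual polymatroids whose singleton values are all even. Since the two maps are inverse to each other, it suffices to verify that each direction sends the stated subclass into the stated subclass; bijectivity of the restrictions then follows automatically from Theorem~\ref{T:tamaspat}.

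For the forward direction, let $(N,e)$ be an integer polyquantoid. From $e\hatt(I)=e(I)+\sum_{i\in I}e(i)$ and integrality of $e$ we get that all values of $e\hatt$ are integers, so by Theorem~\ref{T:tamaspat} the pair $(N,e\hatt)$ is an integer tight selfdual polymatroid. Moreover $e\hatt(i)=e(i)+e(i)=2e(i)$ for every $i\in N$, so its singleton values are even, as required.

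For the converse direction, let $(N,h)$ be an integer tight selfdual polymatroid with $h(i)$ even for every $i\in N$. Then $\tfrac12 h(i)\in\mathbb{Z}$ for each $i$, so from $h\chee(I)=h(I)-\tfrac12\sum_{i\in I}h(i)$ and integrality of $h$ all values of $h\chee$ are integers; by Theorem~\ref{T:tamaspat}, $(N,h\chee)$ is a polyquantoid, hence an integer polyquantoid. Combining the two directions with the bijectivity from Theorem~\ref{T:tamaspat} gives the claimed mutually inverse bijections.

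I do not expect a genuine obstacle here; the proof is essentially bookkeeping on top of Theorem~\ref{T:tamaspat}. The only subtle point is the parity accounting: the factor $\tfrac12$ in the definition of $h\chee$ is exactly what forces the even-singleton restriction on the polymatroid side, and conversely the doubling $e\hatt(i)=2e(i)$ shows that this restriction is precisely matched, so no polymatroids are lost or gained.
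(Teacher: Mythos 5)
Your proposal is correct and is exactly the argument the paper intends: the corollary is stated without proof as an immediate consequence of Theorem~\ref{T:tamaspat}, and your parity bookkeeping ($e\hatt(i)=2e(i)$ in one direction, integrality of $\tfrac12 h(i)$ in the other) is precisely the routine verification being left to the reader.
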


\begin{corollary}\label{C:quantoid}
     The mappings $e\mapsto e\hatt$ and $h\mapsto h\chee$ induce mutually inverse bijections
     between the quantoids and the integer tight selfdual polymatroids whose values
     on all singletons equal zero or two.
\end{corollary}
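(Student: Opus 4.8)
The plan is to deduce this from Corollary~\ref{C:integer} by tracking what the two linear mappings do to singleton values. First I would note that for a singleton $I=\{i\}$ the defining formulas collapse to $e\hatt(i)=e(i)+e(i)=2\,e(i)$ and $h\chee(i)=h(i)-\tfrac12 h(i)=\tfrac12 h(i)$. Consequently, a function $e$ satisfies $e(i)\in\{0,1\}$ for all $i\in N$ if and only if its image satisfies $e\hatt(i)\in\{0,2\}$ for all $i\in N$, and symmetrically $h(i)\in\{0,2\}$ for all $i$ exactly when $h\chee(i)\in\{0,1\}$ for all $i$.

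Next I would invoke Corollary~\ref{C:integer}: $e\mapsto e\hatt$ and $h\mapsto h\chee$ are mutually inverse bijections between the integer polyquantoids and the integer tight selfdual polymatroids whose values on all singletons are even. A quantoid is by definition precisely an integer polyquantoid all of whose singleton values lie in $\{0,1\}$, so by the singleton computation above it is carried by $e\mapsto e\hatt$ to an integer tight selfdual polymatroid whose singleton values lie in $\{0,2\}$; such values are in particular even, so this subclass sits inside the codomain of Corollary~\ref{C:integer}. Conversely, an integer tight selfdual polymatroid $h$ with singleton values in $\{0,2\}$ has, a fortiori, even singleton values, hence equals $e\hatt$ for the unique integer polyquantoid $e=h\chee$, and $e(i)=\tfrac12 h(i)\in\{0,1\}$ shows that $e$ is a quantoid. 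Thus the two subclasses correspond under the bijection.

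Finally, since $e\mapsto e\hatt$ and $h\mapsto h\chee$ are already known to be mutually inverse bijections on the larger classes, restricting domain and codomain to these mutually corresponding subclasses yields at once mutually inverse bijections between the quantoids and the integer tight selfdual polymatroids whose singleton values equal zero or two. The argument is entirely routine; the only point requiring a moment's care is that the ``even singleton values'' hypothesis of Corollary~\ref{C:integer} is subsumed by---rather than in conflict with---the sharper ``$\{0,2\}$'' condition, so nothing is lost in passing between the two statements.
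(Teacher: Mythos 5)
Your proposal is correct and is exactly the argument the paper intends: the corollary is stated without proof as an immediate consequence of Corollary~\ref{C:integer} (hence of Theorem~\ref{T:tamaspat}), obtained by the same singleton bookkeeping $e\hatt(i)=2\,e(i)$ and $h\chee(i)=\tfrac12 h(i)$ that you carry out. Nothing further is needed.
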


\section{Ideal secret sharing\label{S:isss}}

 Given a polymatroid $(N,h)$, an element $\nnn$ of $N$ is \emph{perfect}
 if $h(\nnn\cup I)-h(I)$ equals $h(\nnn)$ or zero, for all $I\pdm N\sm\nnn$.
 In the latter case, $I$ is \emph{authorized} for $\nnn$. By submodularity,
 \[
    h(\nnn\cup I)-h(I)\geq h(\nnn\cup J)-h(J)\,,\qquad I\pdm J\pdm N\sm \nnn\,.
 \]
 Hence, $h(\nnn\cup I)-h(I)=0$ implies $0\geq h(\nnn\cup J)-h(J)$, and
 $h(\nnn\cup J)-h(J)=h(\nnn)$ implies $h(\nnn\cup I)-h(I)\geq h(\nnn)$.
 The two inequalities are tight as $h$ is a polymatroid. Thus, the family
 of authorized sets for $\nnn$ is closed to supersets and the family of
 sets $I\pdm N\sm\nnn$ with $h(\nnn\cup I)-h(I)$ equal to $h(\nnn)$ is
 closed to subsets. This is referred to as heredity. If $\nnn$ is perfect
 and $h(\nnn)>0$ then the two families are disjoint and cover all subsets
 of $N\sm\nnn$, which is referred to as dichotomy.

 In a polymatroid $(N,h)$ with a perfect element $\nnn\in N$, an element
 $i\in N\sm\nnn$ is \emph{essential} for $\nnn$ if it belongs to some set
 $I$ that is authorized for $\nnn$ and $h(\nnn\cup I\sm i)-h(I\sm i)=h(\nnn)$.
 As a consequence,
 \[
     h(i)\geq h(I)-h(I\sm i)=h(\nnn\cup I)-h(I\sm i)
      \geq h(\nnn\cup I\sm i)-h(I\sm i)=h(\nnn)\,,
 \]
 since $h$ is submodular and nondecreasing. A perfect element $\nnn$ in
 a polymatroid $(N,h)$ is \emph{ideal} if each $i\in N\sm\nnn$ is essential
 for $\nnn$ and $h(i)=h(\nnn)$.

 For example, in any matroid $(N,r)$ each element is perfect. Given $\nnn\in N$,
 a set $I\pdm N\sm\nnn$ is authorized for $\nnn$ if and only if a circuit contained
 in $\nnn\cup I$ contains $\nnn$. If $r(\nnn)=0$, thus $\nnn$ is a loop, then all
 $i\in N\sm\nnn$ are essential for $\nnn$. Hence, $\nnn$ is ideal if only if $r(N)=0$.
 Otherwise, when $r(\nnn)=1$, $i$ is essential for $\nnn$ if and only if there exists
 a circuit of the matroid containing $\nnn$ and~$i$. Therefore, $\nnn$ is ideal if only
 if the matroid is connected. Each element of any connected matroid is ideal.

 When restricting to the entropic polymatroids, the above notions correspond to the
 information-theoretical secret sharing schemes.

 The following assertion claims that existence of an ideal element
 implies matroidal structure. It follows from an existing result,
 see~Section~\ref{S:disc}, but a self-contained proof is presented for
 convenience.

\begin{theorem}\label{T:polymatroidsss}
    If a polymatroid $(N,h)$ has an ideal element then there exists
    a matroid $(N,r)$ and $t>0$ such that $h=t\,r$.
\end{theorem}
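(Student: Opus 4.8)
The plan is to reduce the statement to a claim about rank increments and then settle the latter by induction, the one genuinely hard point being an argument based on a minimal authorized set. Put $t\triangleq h(\nnn)$. If $t=0$ then $h(i)=h(\nnn)=0$ for all $i\in N$, whence submodularity and monotonicity force $h\equiv0$ and the conclusion holds with the all-loop matroid and $t=1$. Assume $t>0$, so $h(i)=t$ for every $i\in N$. It suffices to show
\[
  h(A\cup i)-h(A)\in\{0,t\}\qquad\text{for all }A\pdm N\text{ and }i\in N\sm A;
\]
for then, building any subset of $N$ by adjoining its elements one at a time presents $h$ as a sum of increments from $\{0,t\}$, so $r\triangleq h/t$ is an integer polymatroid with $r(i)=1$ for all $i$, hence a matroid, and $h=t\,r$ with $t>0$.

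I would prove the displayed claim by induction on $|A|$; the case $A=\pmn$ is clear, and the case $i=\nnn$ is the perfectness of $\nnn$. Write $m(B)\triangleq h(\nnn\cup B)-h(B)$, so $m(B)\in\{0,t\}$ for all $B$ by perfectness, while $B\mapsto m(B)$ and $B\mapsto h(B\cup i)-h(B)$ are nonincreasing by submodularity. Fix $i\in N\sm\nnn$. If $\nnn\in A$, write $A=\nnn\cup A'$; then
\[
  h(A\cup i)-h(A)=\big[h(A'\cup i)-h(A')\big]+\big[m(A'\cup i)-m(A')\big],
\]
where the first bracket is in $\{0,t\}$ by induction and the second in $\{-t,0\}$, so the nonnegative left-hand side is in $\{0,t\}$. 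If $\nnn\notin A$, set $u=h(A\cup i)-h(A)$, $v=h(\nnn\cup A\cup i)-h(\nnn\cup A)$, $p=m(A)$, $q=m(A\cup i)$; then $p,q\in\{0,t\}$, $q\le p$, $0\le v\le u\le h(i)=t$ and $u-v=p-q$, so if $p\ne q$ then $p=t$, $q=0$, hence $v=0$ and $u=t$. Thus only the case $\nnn\notin A$ with $p=q$ (and then $u=v$) remains, and there one may further assume $h((A\sm j)\cup i)-h(A\sm j)=t$ for all $j\in A$ (else $u=0$ by the nonincreasing property). It remains to exclude $u\in(0,t)$.

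Here essentiality is indispensable. Since $i$ is essential, heredity and dichotomy of the authorized family furnish an inclusion-minimal authorized $C\pdm N\sm\nnn$ with $i\in C$; for $j\in C$ the set $C\sm j$ is unauthorized, so $h(\nnn\cup(C\sm j))=h(C\sm j)+t$, which together with $h(\nnn\cup C)=h(C)$ and $h(j)=t$ gives $h(C)-h(C\sm j)=t$. Iterating submodularity then yields $h(C\sm K)=h(C)-t|K|$ for every $K\pdm C$, so $h$ equals $t$ times cardinality on $C$, and $h(\nnn\cup(C\sm j))=h(C)$ for all $j\in C$. Now one would play this ``circuit'' against $A$: assuming $u<t$, the set $(C\sm i)\cup A$ must be authorized, for otherwise the inclusion $(C\sm i)\cup A\pdm C\cup A$ with $C\cup A$ authorized would make the increment of $i$ over $(C\sm i)\cup A$ equal to $t$, contradicting $u<t$ and monotonicity of increments; iterating with the inclusion-minimal $C'\pdm C\sm i$ for which $C'\cup A$ is authorized, and combining the dichotomy property with submodularity, one rules out a fractional value of $u$. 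This last step --- the exclusion of $u\in(0,t)$ --- is the principal obstacle; it is precisely where idealness, and not merely perfectness of $\nnn$, is used, and it is equivalent to the authorized family being a matroid port, so it may alternatively be quoted from the result recalled in Section~\ref{S:disc}.
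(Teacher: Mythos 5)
Your reductions are sound as far as they go: the base cases, the case $\nnn\in A$, the case $p\neq q$, and the auxiliary fact that on an inclusion-minimal authorized set $C$ (with $C\sm j$ unauthorized for every $j\in C$, which indeed needs the essentiality witness to cover $j=i$) the function $h$ equals $t$ times cardinality, are all correct. But the proof stops exactly at its crux. In the remaining case ($\nnn\notin A$, $m(A)=m(A\cup i)$, all increments of $i$ over the sets $A\sm j$ equal to $t$), you must exclude $u\in(0,t)$, and here you offer only a gesture: ``iterating with the inclusion-minimal $C'\pdm C\sm i$ for which $C'\cup A$ is authorized, and combining the dichotomy property with submodularity, one rules out a fractional value of $u$.'' No such derivation is given, and it is not clear it can be run as stated. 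In particular, in the subcase where $A$ itself is authorized your $C'$ is empty, so the proposed iteration yields nothing; and in the unauthorized subcase you never explain how dichotomy and submodularity produce the contradiction. You then concede the point and propose to quote the external result of Blakley--Kabatianski mentioned in Section~\ref{S:disc}; that is a legitimate citation, but it abandons precisely the self-contained argument the theorem asks for, so as a proof the proposal is incomplete.

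For comparison, the paper's proof closes this gap with two structured arguments that your sketch does not reproduce. For an authorized set $L$ it picks an inclusion-\emph{maximal} $I\pdm L$ with $h(\nnn\cup I)=h(\nnn)+h(I)$ and shows, via submodularity and $h(\ell)=h(\nnn)$, that $h(\nnn\cup L\sm\ell)=h(L)$ for $\ell\in L\sm I$, which forces the increment into $\{0,h(\nnn)\}$. For an unauthorized $L$ it uses essentiality of $\ell\in L$ to obtain an authorized $K\ni\ell$ with $K\sm\ell$ unauthorized, chosen to \emph{minimize} $|K\sm L|$, proves the additivity $h(I\cup(K\sm L))=h(I)+\sum_{k\in K\sm L}h(k)$ for $I\pdm L$, and transfers the computation on $L$ to the authorized set $L\cup K$, where the first argument applies. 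Some device of this kind (a maximality or minimality choice plus an additivity step) is what your argument is missing; the preliminary case analysis by increments, while fine, does not by itself reach the conclusion.
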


%
%

\begin{proof}
 Let $\nnn\in N$ be an ideal element of the polymatroid. If $h(\nnn)=0$ then
 $h(i)=0$ for all $i\in N$ whence $(N,h)$ is a matroid and the assertion
 holds with any $t>0$. Let $h(\nnn)>0$.

 The idea is to prove that `if $L\pdm N$ is nonempty then there exists
 $\ell\in L$ such that $h(L)-h(L\sm\ell)$ equals $h(\nnn)$ or zero'. This
 implication and an induction argument on the cardinality of $L$ show
 that all values of $h$ are multiples of $h(\nnn)$. As a consequence,
 $h$ equals a matroid rank function multiplied by $t=h(\nnn)>0$.

 If $L\pdm N$ contains $\nnn$ the implication holds with $\ell=\nnn$
 because $\nnn$ is perfect.

 If $L\pdm N\sm\nnn$ is authorized, $h(\nnn\cup L)=h(L)$, then
 $h(\nnn\cup I)=h(\nnn)+h(I)$ for some $I\pdm L$, e.g.\ $I=\pmn$.
 Such a set $I$ is chosen to be inclusion maximal. By dichotomy,
 $I \varsubsetneq L$. Let $\ell\in L\sm I$. Since $I$ is maximal
 and $\nnn$ perfect, $\ell\cup I$ is authorized,
 $h(\nnn\cup \ell\cup I)=h(\ell\cup I)$. This
 and submodularity imply
 \[\begin{split}
    h(\nnn\cup L\sm \ell)+h(\nnn\cup \ell\cup I)&\geq
        h(\nnn\cup L)+h(\nnn\cup I)=h(\nnn\cup L)+h(\nnn)+h(I)\,,\\
    h(\ell)+h(I)&\geq h(\ell\cup I)=h(\nnn\cup \ell\cup I)\,.
 \end{split}\]
 As $\nnn$ is ideal, $h(\nnn)=h(\ell)$, and it follows by adding that
 $h(\nnn\cup L\sm \ell)\geq h(\nnn\cup L)$. Thus,
 $h(\nnn\cup L\sm \ell)=h(\nnn\cup L)=h(L)$ because
 $h$ is nondecreasing and $L$ authorized. This implies that
 $h(L)-h(L\sm \ell)$ equals $h(\nnn\cup L\sm \ell)-h(L\sm \ell)$
 which is zero or $h(\nnn)$ by perfectness of~$\nnn$.
 Hence, the implication holds for every $L$ authorized.

 By dichotomy, it remains to consider a nonempty subset $L$ of $N\sm\nnn$
 such that $h(\nnn\cup L)$ equals $h(\nnn)+h(L)$. Since $\nnn$ is ideal,
 any $\ell\in N\sm\nnn$ is essential for $\nnn$. Taking some $\ell\in L$
 there exists an authorized set $K$, $h(\nnn\cup K)=h(K)$, such that $\ell\in K$
 and $h(\nnn\cup K\sm\ell)$ equals $h(\nnn)+h(K\sm\ell)$. Such a set $K$
 is chosen to obtain the cardinality of $K\sm L$ minimal. By dichotomy, $K$ is not
 contained in $L$. For every $k\in K\sm L$ the minimality implies that the set
 $L\cup K\sm k$, containing the chosen $\ell$, is not authorized. In turn, since $h$ is
 submodular, $L\cup K$ authorized and $h$ nondecreasing
 \[
    h(k)+h(L\cup K\sm k)\geq h(L\cup K)=h(\nnn\cup L\cup K)
        \geq h(\nnn\cup L\cup K\sm k)=h(\nnn)+h(L\cup K\sm k)\,.
 \]
 The above two inequalities are tight because $h(\nnn)=h(k)$, using that $\nnn$
 is ideal. Therefore, $h(L\cup K)=h(k)+h(L\cup K\sm k)$ for $k\in K\sm L$. By
 induction,
 \[
    h(I\cup (K\sm L))=h(I)+\sum_{k\in K\sm L}\,h(k)\,, \qquad I\pdm L\,.
 \]
 This implies that $h(L)-h(L\sm\ell)$ equals $h(L\cup K)-h((L\cup K)\sm\ell)$.
 The previous part of the proof is applied to the authorized set $K$ in the
 role of $L$ and the non-authorized set $K\sm\ell$ in the role of $I$ to
 conclude that $h(\nnn\cup K\sm \ell)=h(\nnn\cup K)$. This implies that
 $h(\nnn\cup(L\cup K)\sm\ell)$ equals $h(\nnn\cup L \cup K)$ which coincides
 with $h(L\cup K)$ because $L\cup K$ is authorized. Hence, $h(L)-h(L\sm\ell)$
 equals $h(\nnn\cup (L \cup K)\sm\ell)-h((L\cup K)\sm\ell)$ which is zero or
 $h(\nnn)$ by perfectness of $\nnn$. Thus, the implication holds for all
 nonempty $L\pdm N\sm\nnn$.
\end{proof}

 Given a polyquantoid $(N,e)$, an element $\nnn$ of $N$ is \emph{perfect} if
 $e(\nnn\cup I)-e(I)$ equals $e(\nnn)$ or $-e(\nnn)$, for all $I\pdm N\sm\nnn$.
 In the latter case, $I$ is \emph{authorized} for $\nnn$. The definition of
 perfectness does not change when requiring that $e\hatt(\nnn\cup I)-e\hatt(I)$
 equals $e\hatt(\nnn)$ or zero. Thus, $\nnn$ is perfect in $(N,e)$ if and only if
 it is perfect in the polymatroid $(N,e\hatt)$. Therefore, supersets of authorized
 sets are authorized and the equality $e(\nnn\cup I)-e(I)=e(\nnn)$ with
 $I\pdm N\sm\nnn$ is inherited by the subsets of $I$. The dichotomy takes
 place whenever $e(\nnn)>0$.

 In a polyquantoid $(N,e)$ with a perfect element $\nnn\in N$, an element
 $i\in N\sm\nnn$ is \emph{essential} for $\nnn$ if there exists a set $I$
 which authorized for $\nnn$, contains $i$ and $e(\nnn\cup I\sm i)-e(I\sm i)=e(\nnn)$.
 This is equivalent to saying that $i\in N\sm\nnn$ is \emph{essential} for $\nnn$ in the
 polymatroid $(N,e\hatt)$. Hence, $e(i)\geq e(\nnn)$ once $i$ is essential for $\nnn$ in $(N,e)$.
 A perfect element $\nnn$ in a polyquantoid $(N,e)$ is \emph{ideal} if each
 $i\in N\sm\nnn$ is essential for $\nnn$ and $e(i)=e(\nnn)$.

\begin{theorem}\label{T:polyquantoidsss}
    If a polyquantoid $(N,e)$ has an ideal element then there exists
    a tight selfdual matroid $(N,r)$ and $t>0$ such that $e=t\, r\chee$.
\end{theorem}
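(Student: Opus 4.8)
The plan is to reduce everything to Theorem~\ref{T:polymatroidsss} by transporting the ideal element across the bijection $e\mapsto e\hatt$ of Theorem~\ref{T:tamaspat}. The discussion preceding the statement already records that $\nnn\in N$ is perfect in the polyquantoid $(N,e)$ exactly when it is perfect in the polymatroid $(N,e\hatt)$, and that $i\in N\sm\nnn$ is essential for $\nnn$ in $(N,e)$ exactly when it is essential for $\nnn$ in $(N,e\hatt)$. To complete the comparison of idealness I would only add the observation that $e\hatt(j)=2e(j)$ on every singleton $j$, so $e(i)=e(\nnn)$ is equivalent to $e\hatt(i)=e\hatt(\nnn)$. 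Hence $\nnn$ is an ideal element of the polymatroid $(N,e\hatt)$.

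Next I would apply Theorem~\ref{T:polymatroidsss} to $(N,e\hatt)$, obtaining a matroid $(N,r)$ and some $t>0$ with $e\hatt=t\,r$. It remains to check that this $r$ is tight and selfdual. By Theorem~\ref{T:tamaspat}, $e\hatt$ itself is a tight selfdual polymatroid. Tightness, being the family of equalities $h(N\sm i)=h(N)$, is plainly preserved under dividing by $t>0$, so $r=e\hatt/t$ is tight. For selfduality I would point out that the duality mapping $h\mapsto h'$ is linear in $h$ (this is immediate from its defining formula), so $(e\hatt)'=t\,r'$; combined with $(e\hatt)'=e\hatt=t\,r$ and $t>0$ this forces $r'=r$. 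Thus $(N,r)$ is a tight selfdual matroid.

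Finally, since $e\mapsto e\hatt$ and $h\mapsto h\chee$ are mutually inverse, $e=(e\hatt)\chee=(t\,r)\chee$, and because $h\mapsto h\chee$ is also linear in $h$ (again immediate from its formula), $(t\,r)\chee=t\,r\chee$. Therefore $e=t\,r\chee$ with $(N,r)$ a tight selfdual matroid and $t>0$, which is the assertion.

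I do not expect a genuine obstacle here: all the substantive combinatorial work is already carried out in Theorem~\ref{T:polymatroidsss}, and the remaining steps are bookkeeping. The only points I would be careful to state explicitly are the singleton identity $e\hatt(j)=2e(j)$ (so that idealness transfers), the linearity of $h\mapsto h'$ and $h\mapsto h\chee$ (so that scaling a tight selfdual polymatroid by $1/t$ stays tight and selfdual, and so that $(t\,r)\chee=t\,r\chee$), and the fact that $t>0$ is furnished directly by Theorem~\ref{T:polymatroidsss}.
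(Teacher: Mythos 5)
Your proposal is correct and follows essentially the same route as the paper: transfer the ideal element to the tight selfdual polymatroid $(N,e\hatt)$ via Theorem~\ref{T:tamaspat}, invoke Theorem~\ref{T:polymatroidsss} to get $e\hatt=t\,r$, and pull back with $h\mapsto h\chee$. Your extra remarks (the singleton identity $e\hatt(j)=2e(j)$ and the linearity of $h\mapsto h'$ and $h\mapsto h\chee$, which give tightness and selfduality of $r$ and $(t\,r)\chee=t\,r\chee$) simply make explicit the bookkeeping the paper leaves implicit.
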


\begin{proof}
 If $\nnn\in N$ is ideal in the polyquantoid then $\nnn$ is ideal in $(N,e\hatt)$
 which is a tight selfdual polymatroid by Theorem~\ref{T:tamaspat}.
 Theorem~\ref{T:polymatroidsss} implies that $e\hatt=t\,r$ for $t>0$
 and a matroid rank function $r$. Hence, $r$ is tight, selfdual,
 and $e=(e\hatt)\chee=(t\,r)\chee=t\,r\chee$.
\end{proof}

\noindent
 As a consequence, if $\nnn$ is an ideal element of a polyquantoid then
 $I\pdm N\sm\nnn$ is authorized for $\nnn$ if and only if $\nnn\in C\pdm\nnn\cup I$
 for some circuit $C$ of the tight selfdual matroid that is assigned to the
 polyquantoid in Theorem~\ref{T:polyquantoidsss}.

\section{Expansions\label{S:expan}}

 A set function $h$ with a ground set $N$ \emph{expands} to a set function $h\exd$
 with  a ground set $N\exd$ if there exists a mapping $\phi$ on $N$ ranging in the
 family of subsets of $N\exd$ such that $h\exd(\bigcup_{i\in I}\phi(i))$ equals $h(I)$
 for all $I\pdm N$.

 Each integer polymatroid $(N,h)$ can be expanded to a matroid as follows.
 Let $\phi$ map $i\in N$ to a set $\phi(i)$ of cardinality $h(i)$ such
 that these sets are pairwise disjoint. Writing
 $\phi(I)=\bigcup_{i\in I}\,\phi(i)$, $I\pdm N$, the construction
 \[
       h_\phi\colon K\mapsto \min_{J\pdm N}\:[\, h(J)+ |K\sm \phi(J)|\,]
            \,,\qquad K\pdm \phi(N)\,,
 \]
 defines a matroid $(\phi(N),h_\phi)$ called a \emph{free expansion} of
 $(N,h)$. The value $h_\phi(K)$
 depends on $K$ only through the cardinalities of the sets $\phi(i)\cap K$,
 $i\in N$. The minimization can be equivalently over the sets that satisfy
 \[
    \{i\in N\colon \phi(i)\cap K\neq\pmn\}\supseteq J\supseteq
    \{i\in N\colon \pmn\neq\phi(i)\pdm K\}
 \]
 since $h$ is nondecreasing and submodular. Such sets $J$ are termed to
 be \emph{adapted} to $K$. Hence, $h_\phi(\phi(I))$ equals $h(I)$ for all
 $I\pdm N$, using that $\{i\in I\colon\phi(i)\neq\pmn\}$ is the unique
 adapted set to $\phi(I)$, and thus $h$ expands to~$h_\phi$.

 For any integer polyquantoid $(N,e)$, an analogous construction is introduced
 as follows. Let $\psi$ map $i\in N$ to a set $\psi(i)$ of cardinality $e(i)$
 such that these sets are pairwise disjoint, $\psi(I)=\bigcup_{i\in I}\,\psi(i)$,
 $I\pdm N$, and
 \[
       e_\psi\colon K\mapsto \min_{J\pdm N}\:[\, e(J)+ |K\vartriangle\psi(J)|\,]
            \,,\qquad K\pdm \psi(N)\,.
 \]
 Let $(\psi(N),e_\psi)$ be called a \emph{free expansion} of $(N,e)$.
 The minimization can be equivalently over the sets adapted to $K$,
 using that $e$ is normalized, complementary and submodular. Therefore,
 $e_\psi(\psi(I))=e(I)$, $I\pdm N$, thus $e$ expands to~$e_\psi$ indeed.

 The following assertion shows that from the viewpoint of expansions,
 quantoids are for polyquantoids what matroids are to polymatroids.

\begin{theorem}\label{T:expan.seldual.polyquant}
    Any free expansion of an integer polyquantoid is a quantoid.
\end{theorem}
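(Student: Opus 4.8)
The plan is to verify directly that the free expansion $(\psi(N),e_\psi)$ meets every requirement in the definition of a quantoid: normalized, complementary, submodular, integer, and with all singleton values in $\{0,1\}$. Integrality is clear, since $e$ is integer and the cardinalities $|K\vartriangle\psi(J)|$ are integer. For normalization, the choice $J=\pmn$ in the defining minimum gives $e(\pmn)+|\psi(\pmn)|=0$, and every minimand $e(J)+|K\vartriangle\psi(J)|$ is nonnegative: submodularity of $e$ applied to $J$ and $N\sm J$ gives $2e(J)=e(J)+e(N\sm J)\ge e(N)+e(\pmn)=0$, so $e\ge 0$. The same bound gives $e_\psi(\{x\})\ge 0$, while $J=\pmn$ gives $e_\psi(\{x\})\le|\{x\}|=1$; being integer, $e_\psi(\{x\})\in\{0,1\}$.

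For complementarity, fix $K\pdm\psi(N)$ and, in the minimum defining $e_\psi(\psi(N)\sm K)$, reparametrize by $J\mapsto N\sm J$. Using the complementarity $e(N\sm J)=e(J)$, the identity $\psi(N\sm J)=\psi(N)\sm\psi(J)$ (valid since the $\psi(i)$ are pairwise disjoint with union $\psi(N)$), and the set identity $(U\sm A)\vartriangle(U\sm B)=A\vartriangle B$ applied with $U=\psi(N)$, $A=K$, $B=\psi(J)$, the minimand turns into $e(J)+|K\vartriangle\psi(J)|$. Hence $e_\psi(\psi(N)\sm K)=e_\psi(K)$.

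Submodularity is the crux. Given $K_1,K_2\pdm\psi(N)$, choose $J_1,J_2\pdm N$ attaining the minima for $e_\psi(K_1)$ and $e_\psi(K_2)$. Since the $\psi(i)$ are pairwise disjoint, $\psi(J_1\cup J_2)=\psi(J_1)\cup\psi(J_2)$ and $\psi(J_1\cap J_2)=\psi(J_1)\cap\psi(J_2)$, so the inequality
\[
   |K_1\vartriangle\psi(J_1)|+|K_2\vartriangle\psi(J_2)|
     \ge |(K_1{\cup}K_2)\vartriangle\psi(J_1{\cup}J_2)|+|(K_1{\cap}K_2)\vartriangle\psi(J_1{\cap}J_2)|
\]
can be checked element by element: writing $a_t=[x\in K_t]$ and $b_t=[x\in\psi(J_t)]$ for $x\in\psi(N)$, it reduces to the Boolean inequality $[a_1\ne b_1]+[a_2\ne b_2]\ge[a_1{\vee}a_2\ne b_1{\vee}b_2]+[a_1{\wedge}a_2\ne b_1{\wedge}b_2]$, which holds by inspecting the finitely many assignments. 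Adding the submodularity inequality $e(J_1)+e(J_2)\ge e(J_1\cup J_2)+e(J_1\cap J_2)$ and bounding each of the two resulting terms $e(J_1\cup J_2)+|(K_1\cup K_2)\vartriangle\psi(J_1\cup J_2)|$ and $e(J_1\cap J_2)+|(K_1\cap K_2)\vartriangle\psi(J_1\cap J_2)|$ below by the respective minimum over $J$ yields $e_\psi(K_1)+e_\psi(K_2)\ge e_\psi(K_1\cup K_2)+e_\psi(K_1\cap K_2)$. The one real obstacle is recognizing and confirming the pointwise Boolean inequality above, which is the symmetric-difference counterpart of the inequality $|K_1\sm\phi(J_1)|+|K_2\sm\phi(J_2)|\ge|(K_1\cup K_2)\sm\phi(J_1\cup J_2)|+|(K_1\cap K_2)\sm\phi(J_1\cap J_2)|$ underlying the polymatroid free expansion; once it is available, the rest is routine bookkeeping parallel to the polymatroid case.
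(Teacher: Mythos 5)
Your proof is correct and follows essentially the same route as the paper's: a direct verification that $e_\psi$ is integer, normalized, has singleton values in $\{0,1\}$, is complementary and submodular, with submodularity obtained exactly as in the paper from minimizers $J_1,J_2$, submodularity of $e$, and a termwise comparison of the symmetric-difference cardinalities. The only variations are minor and if anything cleaner: you prove complementarity via the reparametrization $J\mapsto N\setminus J$ rather than the paper's bijection between adapted sets, you bound singleton values in $\{0,1\}$ where the paper shows they equal one (either suffices for the definition of a quantoid), and you correctly state the symmetric-difference comparison as an inequality, whereas the paper asserts it as an equality, which can fail pointwise; the inequality is all that the argument needs.
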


\begin{proof}
 Let $(N,e)$ be an integer polyquantoid and $\psi$ a mapping as above.
 By definition, $e_\psi(K)=|K|$ if $K\pdm\psi(i)$ for some $i\in N$. In
 particular, $e_\psi$ is normalized and its values on singletons equal one.

 For $J\pdm N$ adapted to $K\pdm\psi(N)$, the set
 $J'=\{i\in N\sm J\colon\psi(i)\neq\pmn\}$ is adapted
 to $\psi(N)\sm K$ and
 \[
    e(J)+|K\vartriangle\psi(J)|
            =e(J')+\big|(\psi(N)\sm K)\vartriangle(\psi(J'))\big|
 \]
 using $e(J)=e(N\sm J)=e(J')$. Moreover, $J\mapsto J'$ is a bijection
 between the families of those sets that are adapted to $K$, resp.~to
 $\psi(N)\sm K$. It follows by minimization that $e_\psi(K)$ equals
 $e_\psi(\psi(N)\sm K)$, thus $e_\psi$ is complementary.

 To prove that $e_\psi$ is submodular, let $K,L\pdm\psi(N)$ and
 \[
       e_\psi(K)=e(I)+|K\vartriangle \psi(I)|\quad\text{and}\quad
       e_\psi(L)=e(J)+|L\vartriangle \psi(J)|\,
 \]
 where $I$ is adapted to $K$ and $J$ is adapted to $L$.
 As $e(I)+e(J)\geq e(I\cup J)+e(I\cap J)$ and
 \[
    |K\vartriangle\psi(I)|+|L\vartriangle\psi(J)|=
        |(K\cup L)\vartriangle\psi(I\cup J) |
            +|(K\cap L)\vartriangle\psi(I\cap J)|
 \]
 the submodularity of $e_\psi$ follows.
\end{proof}

 In the remaining part of this section, expansions of polymatroids
 and polyquantoids are compared by means of the mappings
 $e\mapsto e\hatt$ and $h\mapsto h\chee$.

 Let $(N,h)$ be an integer polymatroid with $h(i)$ even for all $i\in N$
 and $(\phi(N),h_\phi)$ its free expansion. Then, each set $\phi(i)$ can be
 partitioned into two-element blocks $m=\{k,\ell\}$ having $k,\ell\in\phi(i)$
 different. Let $\vre(i)$ denote the set of all blocks in such a partition,
 $\vre(I)=\bigcup_{i\in I}\,\vre(i)$, $I\pdm N$, and
 \[
       h_{\vre}(M)\triangleq h_\phi(\zje M)
        =\min_{J\pdm N}\:[\, h(J)+ |\zje M\sm \phi(J)|\,]
            \,,\qquad M\pdm \vre(N)\,.
 \]
 This defines a polymatroid $(\vre(N),h_{\vre})$ called here \emph{$2$-factor}
 of $(\phi(N),h_{\phi})$. By definitions, $(N,h)$ expands to $(\vre(N),h_{\vre})$
 which in turn expands to $(\phi(N),h_{\phi})$.

 The following assertion indicates a correspondence between the free expansions
 of polymatroids and polyquantoids.

\begin{lemma}\label{L:polyq.to.polym}
   If $(N,e)$ is an integer polyquantoid, $h=e\hatt$, $(\phi(N),h_\phi)$
   a free expansion of $(N,h)$ and $(\vre(N),h_{\vre})$ its $2$-factor
   then $(\vre(N),(h_{\vre})\chee)$ is a free expansion of $(N,e)$.
\end{lemma}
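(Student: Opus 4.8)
The plan is to verify directly that $(h_\varphi)^\vee$, when pulled back along $\varphi^*$, reproduces $e$ on the sets $\varphi^*(I)$, $I\subseteq N$; that is, $(h_\varphi)^\vee(\varphi^*(I))=e(I)$ for all $I\subseteq N$. Since $(N,e)$ is an integer polyquantoid, $h=e^\wedge$ is a tight selfdual integer polymatroid (Theorem~\ref{T:tamaspat}) whose singleton values $h(i)=2e(i)$ are even, so the $2$-factor $(\varphi^*(N),h_{\varphi^*})$ is well defined; being a $2$-factor of a matroid, it is a polymatroid, and $h_{\varphi^*}(\varphi^*(i))=h_\varphi(\bigcup\varphi^*(i))=h_\varphi(\varphi(i))=h(i)=2e(i)$, so its singleton values are even and $(h_{\varphi^*})^\vee$ is at least a legitimate target. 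The main computation is then
\[
   (h_{\varphi^*})^\vee(\varphi^*(I))
     = h_{\varphi^*}(\varphi^*(I)) - \tfrac12\sum_{i\in I} h_{\varphi^*}(\varphi^*(i))
     = h(I) - \tfrac12\sum_{i\in I} h(i)
     = h^\vee(I) = (e^\wedge)^\vee(I) = e(I)\,,
\]
where the second equality uses that $\bigcup\varphi^*(I)=\varphi(I)$ and $h_\varphi(\varphi(I))=h(I)$ (the expansion property of the free expansion, already established in the text), together with the analogous singleton identity. Thus $e$ expands to $(\varphi^*(N),(h_{\varphi^*})^\vee)$ via the mapping $\psi=\varphi^*$.

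What remains is to show that this expansion is in fact a \emph{free} expansion of $(N,e)$, i.e.\ that $(h_{\varphi^*})^\vee$ coincides with the construction $e_\psi(K)=\min_{J\subseteq N}[\,e(J)+|K\vartriangle\psi(J)|\,]$ for the specific $\psi=\varphi^*$. I would first record that $\psi(i)=\varphi^*(i)$ has cardinality $|\varphi(i)|/2 = h(i)/2 = e(i)$ and that these sets are pairwise disjoint, so $\psi$ is an admissible mapping for the polyquantoid free-expansion construction. Then, for $K\subseteq\varphi^*(N)$, writing $\bigcup K$ for the corresponding subset of $\phi(N)$, I would compute
\[
  (h_{\varphi^*})^\vee(K) = h_\varphi(\textstyle\bigcup K) - \tfrac12\sum_{i\colon \varphi^*(i)\cap K\neq\emptyset}\! h(i)\cdot\frac{|\varphi^*(i)\cap K|}{|\varphi^*(i)|}
\]
—more carefully, $h_\varphi^\vee$ on $\bigcup K$ subtracts half the singleton values weighted by the fractional membership of $\bigcup K$ in each $\phi(i)$—and then substitute the min-formula $h_\varphi(\bigcup K)=\min_{J\subseteq N}[\,h(J)+|\bigcup K\setminus\phi(J)|\,]$. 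Using $h=e^\wedge$, i.e.\ $h(J)=e(J)+\sum_{j\in J}e(j)$, and bookkeeping the set-difference term $|\bigcup K\setminus\phi(J)|$ together with the subtracted singleton corrections, the goal is to collapse everything into $\min_{J\subseteq N}[\,e(J)+|K\vartriangle\varphi^*(J)|\,]$. The symmetric-difference term should appear because on blocks inside $\varphi^*(J)$ the $\phi$-side difference $|\phi(i)\setminus\bigcup K|$ combines with the subtracted $e(i)$ (half of $h(i)$) to count the blocks of $\varphi^*(i)$ missing from $K$, while on blocks outside $\varphi^*(J)$ the term $|\bigcup K\cap\phi(i)\setminus\phi(J)|$ together with the leftover half-corrections counts the blocks of $\varphi^*(i)\cap K$ not in $\varphi^*(J)$; adding these gives exactly $|\varphi^*(i)\bigtriangleup K|$ summed over $i$, i.e.\ $|K\bigtriangleup\varphi^*(J)|$.

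The delicate point—and the main obstacle—is that the block structure of $\varphi^*$ interacts nontrivially with a general $K\subseteq\varphi^*(N)$: the free expansion $h_\varphi$ is defined on \emph{all} subsets of $\phi(N)$, not just unions of blocks, yet $(h_{\varphi^*})^\vee(K)$ only sees $\bigcup K$, which is a union of blocks. So I must check that restricting $h_\varphi^\vee$ to block-unions does not lose the minimum that defines $e_\psi$; equivalently, that for block-union arguments the optimal $J$ in $h_\varphi$'s min-formula can be taken adapted in a way compatible with the $\vartriangle$-count on the $\varphi^*$-side. Here I would lean on the already-noted fact that $h_\varphi(\bigcup M)$ depends only on the cardinalities $|\phi(i)\cap\bigcup M|$ (which for a block-union are even, $=2|\varphi^*(i)\cap M|$) and that the minimization may be taken over sets adapted to the argument; this monotone/submodular reduction is precisely what converts the $\phi$-side formula into the $\varphi^*$-side formula. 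Once the min-formulas are matched on block-unions, the claimed equality $(h_{\varphi^*})^\vee = e_{\varphi^*}$ follows, and with it the lemma.
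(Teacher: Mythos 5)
Your overall route is the paper's: reduce $(h_{\vre})\chee$ to $h_\phi$ evaluated at block-unions, insert the min-formula for $h_\phi$, and rewrite it as the min-formula defining the polyquantoid free expansion with $\psi=\vre$. Your opening computation $(h_{\vre})\chee(\vre(I))=e(I)$ is correct, but note that $\chee$ applied to $h_{\vre}$ must subtract $\tfrac12\sum_{m\in M}h_{\vre}(\{m\})$ over the blocks $m$, which are the singletons of the ground set $\vre(N)$ and each have value $h_\phi(m)=2$; your sums over $i\in N$ (and later the ``fractional membership'' weights) only give the right numbers because $h(i)=2|\vre(i)|$, and this imprecision about whose singletons are being summed carries into the main part.

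The genuine gap is that the decisive identity, $(h_{\vre})\chee(K)=\min_{J\pdm N}\big[e(J)+|K\vartriangle\vre(J)|\big]$ for arbitrary $K\pdm\vre(N)$, is left as a verbally described ``goal,'' and the ``main obstacle'' you raise in its place --- that restricting $h_\phi$ to block-unions might lose the minimum, so that an adapted-sets argument is needed --- is not an obstacle, and invoking adaptedness does not by itself produce the conversion. No reduction of the minimization is required: $h_{\vre}(K)$ is by definition $h_\phi(\zje K)=\min_{J\pdm N}[\,h(J)+|\zje K\sm\phi(J)|\,]$, and the objectives agree term by term for every single $J$. Each block of $K$ lies entirely inside or entirely outside $\phi(J)$, so $|\zje K\sm\phi(J)|=2|K\sm\vre(J)|$; since $h_{\vre}(\{m\})=2$ gives $(h_{\vre})\chee(K)=h_\phi(\zje K)-|K|$, and $h(J)=e(J)+|\vre(J)|$ because $h=e\hatt$ and $e(j)=|\vre(j)|$, one gets for every $J\pdm N$
\[
   h(J)+|\zje K\sm\phi(J)|-|K| \;=\; e(J)+|\vre(J)\sm K|+|K\sm\vre(J)| \;=\; e(J)+|K\vartriangle\vre(J)|\,,
\]
and minimizing over $J$ yields $(h_{\vre})\chee=e_{\vre}$. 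So your plan is correct in outline and matches the paper, but as written it stops short of the computation that actually proves the lemma and substitutes for it a worry that dissolves once this termwise identity is observed.
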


\begin{proof}
 For $M\pdm\vre(N)$
 \[
    (h_{\vre})\chee(M)=h_{\vre}(M)-\tfrac12\sum_{m\in M}\,h_{\vre}(\{m\})
                      =h_\phi(\zje M)-|M|
 \]
 using that $h_{\vre}(\{m\})=h_\phi(m)=2$. Since $e(j)=h(j)/2=|\vre(j)|$
 for $j\in N$, if $J\pdm N$ then $h(J)=e(J)+|\vre(J)|$. Then, by the
 definition of polymatroid expansions,
 \[
    (h_{\vre})\chee(M)= \min_{J\pdm N}\:
                \big[\,e(J)+|\vre(J)|+|\zje M\sm \phi(J)|\,\big]-|M|
 \]
 Here, $|\zje M\sm \phi(J)|=2|M\sm\vre(J)|$. Since $|\vre(J)|+|M\sm\vre(J)|-|M|$
 equals $|\vre(J)\sm M|$ it follows from definition of polyquantoid expansions
 that $(h_{\vre})\chee$ coincides with $e_{\vre}$.
\end{proof}

 In the above lemma, the integer polymatroid $h=e\hatt$ is tight and selfdual,
 by Theorem~\ref{T:tamaspat}. The following two lemmas imply that the expansion
 $h_\phi$ and its $2$-factor $h_{\vre}$ have the same properties. Hence,
 Theorem~\ref{T:expan.seldual.polyquant} can be proved alternatively by combining
 Theorem~\ref{T:tamaspat} with Lemmas~\ref{L:polyq.to.polym}, \ref{L:exp.polymat}
 and \ref{L:coa.polymat}. This argument is more involved but illustrates the
 interplay between the two kinds of expansions.

\begin{lemma}\label{L:exp.polymat}
   If an integer polymatroid is tight and selfdual then so are its free expansions.
\end{lemma}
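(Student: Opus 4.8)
The plan is to verify the two defining properties—tightness and selfduality—directly on the free expansion $(\phi(N),h_\phi)$, using the formula $h_\phi(K)=\min_{J\pdm N}[h(J)+|K\sm\phi(J)|]$ and the hypotheses that $h$ is normalized, tight and selfdual. Since the $2$-factor $h_\varphi$ is just a restriction of $h_\phi$ along the coarsening $M\mapsto\zje M$, establishing the properties for $h_\phi$ should immediately transfer; but I would phrase the argument so that it applies to $h_\phi$ as a matroid on $\phi(N)$.

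First I would record that the ground set of $h_\phi$ is $\phi(N)$, and that $h_\phi(\phi(N))=h(N)$ because $\{i\colon\phi(i)\neq\pmn\}$ is adapted to $\phi(N)$ and $h$ is nondecreasing, so any competing $J$ in the minimum only adds the nonnegative term $|\phi(N)\sm\phi(J)|$. For \textbf{tightness}, I must show $h_\phi(\phi(N)\sm x)=h_\phi(\phi(N))$ for every $x\in\phi(N)$; say $x\in\phi(i)$. Taking $J=\{j\in N\colon\phi(j)\neq\pmn\}\sm i$ in the minimum gives $h_\phi(\phi(N)\sm x)\le h(N\sm i)+1=h(N)+1$, which is too weak by one; instead I should take the \emph{adapted} set $J=\{j\colon\phi(j)\neq\pmn\}$ (adapted because $\phi(i)\not\pdm\phi(N)\sm x$ yet $\phi(i)\cap(\phi(N)\sm x)\neq\pmn$ when $|\phi(i)|\ge 2$), yielding $h_\phi(\phi(N)\sm x)\le h(N)+|\{x\}\sm\nnn|$... — here the term $|(\phi(N)\sm x)\sm\phi(N)|=0$, so $h_\phi(\phi(N)\sm x)\le h(N)$, and the reverse inequality holds since $h_\phi$ is a matroid rank function, hence nondecreasing. (When $|\phi(i)|=1$, i.e. $h(i)=1$, tightness of $h$ gives $h(N\sm i)=h(N)$ and the set $J=\{j\colon\phi(j)\neq\pmn\}\sm i$ already works.)

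For \textbf{selfduality}, since $h_\phi$ is a matroid that I have just shown to be tight, by the simplified duality formula it suffices to check $h_\phi(\phi(N)\sm K)-h_\phi(\phi(N))+\sum_{x\in K}h_\phi(\{x\})=h_\phi(K)$ for all $K\pdm\phi(N)$, i.e. $h_\phi(\phi(N)\sm K)+|K|-h(N)=h_\phi(K)$. I would expand the left side using the defining minimum: $h_\phi(\phi(N)\sm K)=\min_{J}[h(J)+|(\phi(N)\sm K)\sm\phi(J)|]$. The key algebraic identity is $|(\phi(N)\sm K)\sm\phi(J)|=|\phi(N\sm J)|-|K\sm\phi(J)|$ — writing $\phi(N)\sm\phi(J)=\phi(N\sm J')$ where $J'=\{i\notin J\colon\phi(i)\neq\pmn\}$ and splitting by membership in $K$. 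Substituting and using tightness of $h$ (so that we may assume $J$ is adapted, whence $N\sm J$ carries the "same" information) together with selfduality of $h$ in the form $h(J)=h(N\sm J)-h(N)+\sum_{i\in J}h(i)$ and $\sum_{i\in J}h(i)=|\phi(J)|$, the minimum over $J$ should reorganize into $\min_{J}[h(J)+|K\sm\phi(J)|]-h(N)+|K|=h_\phi(K)-h(N)+|K|$, as required; the bijection $J\mapsto J'$ on adapted sets (used already in the proof of Theorem~\ref{T:expan.seldual.polyquant}) is what makes the two minima coincide.

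The main obstacle I anticipate is the bookkeeping in the selfduality step: one must be careful that the minimum defining $h_\phi(\phi(N)\sm K)$ may a priori be attained at a set $J$ that is not adapted to $\phi(N)\sm K$, so I first reduce to adapted $J$ (legitimate since $h$ is nondecreasing and submodular, exactly as in the paragraph preceding the theorem), then transport via $J\mapsto J'$, and only then invoke selfduality of $h$. Getting the cardinality identity $|(\phi(N)\sm K)\sm\phi(J')|=|K\sm\phi(J)|+(\text{constant})$ exactly right—so that the $+|K|-h(N)$ correction terms land perfectly—is the delicate point; everything else is routine. Once $h_\phi$ is shown tight and selfdual, the corresponding properties of the $2$-factor $h_\varphi$ follow because $h_\varphi(M)=h_\phi(\zje M)$, the map $M\mapsto\zje M$ respects complementation in $\varphi(N)$ versus $\phi(N)$, sends singletons to two-element blocks of constant rank $2$, and sends $\varphi(N)$ to $\phi(N)$.
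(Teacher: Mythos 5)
Your selfduality step is, in outline, exactly the paper's argument: restrict both minima to adapted sets, use $\sum_{i\in J}h(i)=|\phi(J)|$ together with $h(J)=h(N\sm J)-h(N)+\sum_{i\in J}h(i)$ and the cardinality identity $|(\phi(N)\sm K)\sm\phi(N\sm J)|=|K\sm\phi(J)|+|\phi(N\sm J)|-$(the right constant), and transport through the bijection $J\mapsto J'$. You leave the reorganization as a promise, but it is precisely the two-line computation in the paper, so that half is fine (and your closing paragraph about the $2$-factor is not part of this statement; it is Lemma~\ref{L:coa.polymat}).

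The tightness step, however, contains a genuine error. You prove only $h_\phi(\phi(N)\sm x)\le h(N)=h_\phi(\phi(N))$ (by taking $J=\{j\colon\phi(j)\neq\pmn\}$) and then assert that ``the reverse inequality holds since $h_\phi$ is nondecreasing''. Monotonicity gives $h_\phi(\phi(N)\sm x)\le h_\phi(\phi(N))$ --- the same direction you already have --- so the needed lower bound is unproved; and it cannot be obtained without using tightness of $h$ in the main case $|\phi(i)|\ge2$, which you invoke only when $|\phi(i)|=1$. Indeed, for $x\in\phi(i)$ the sets adapted to $\phi(N)\sm x$ are $J_0=\{j\neq i\colon\phi(j)\neq\pmn\}$ and $J_0\cup\{i\}$, so
\[
   h_\phi(\phi(N)\sm x)=\min\{\,h(N\sm i)+h(i)-1,\;h(N)\,\}\,,
\]
which equals $h(N)$ exactly because $h(N\sm i)=h(N)$; e.g.\ for $N=\{i\}$, $h(i)=2$ (selfdual, not tight) your main-branch reasoning would declare the expansion $U_{2,2}$ tight, which is false. (Your discarded first bound should also read $h(N\sm i)+h(i)-1$, not $h(N\sm i)+1$.) The correct and shorter route is the paper's sandwich: $\phi(N\sm i)\pdm\phi(N)\sm x\pdm\phi(N)$ and $h_\phi(\phi(N\sm i))=h(N\sm i)=h(N)=h_\phi(\phi(N))$, whence monotonicity forces $h_\phi(\phi(N)\sm x)=h_\phi(\phi(N))$.
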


\begin{proof}
 Let $(N,h)$ be an integer polymatroid and $\phi$ a mapping with $|\phi(i)|=h(i)$,
 $i\in N$, as above. For $k\in\phi(N)$ there exists unique $i\in N$ such that
 $k\in\phi(i)$. Assuming that $h$ is tight $h_\phi(\phi(N\sm i))=h(N\sm i)=h(N)=h_\phi(\phi(N))$.
 This implies  $h_\phi(\phi(N)\sm k)=h_\phi(\phi(N))$ whence $h_\phi$ is tight.

 By definition, $h_\phi(K)=|K|$ if $K\pdm\phi(i)$ for some $i\in N$. Hence,
 assuming that $h$ is tight and selfdual, for a set $J\pdm N$ adapted to $K\pdm\phi(N)$,
 \[\begin{split}
    h(J)+|K\sm\phi(J)|&=h(N\sm J)-h(N)+|\phi(J)|+|K\sm\phi(J)|\\
            &=h(N\sm J)-h_\phi(\phi(N))+|K|+\big|(\phi(N)\sm K)\sm(\phi(N\sm J))\big|\,.
 \end{split}\]
 Minimizing over the adapted sets, it follows that
 $h_\phi(K)\geq h_\phi(\phi(N)\sm K)-h_\phi(\phi(N))+|K|$. Since $J$ is adapted
 to $K$ if and only if $J'=\{i\in N\sm J\colon\phi(i)\neq\pmn\}$ is adapted
 to $\phi(N)\sm K$ this inequality is tight. Thus, $h_\phi$ is selfdual.
\end{proof}

\begin{lemma}\label{L:coa.polymat}
   If an integer polymatroid is tight, selfdual and takes even values
   on all singletons then all $2$-factors of its free expansions are
   tight and selfdual.
\end{lemma}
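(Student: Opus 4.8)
The plan is to reduce everything to Lemma~\ref{L:exp.polymat}: it already tells us that the free expansion $(\phi(N),h_\phi)$ of the tight selfdual integer polymatroid $(N,h)$ is itself tight and selfdual, and the $2$-factor $h_{\vre}$ will inherit both properties through the identity $h_{\vre}(M)=h_\phi(\zje M)$ and the bookkeeping relating a block to the pair of points it covers. First I would collect the elementary facts needed. Since the blocks in $\vre(N)$ are pairwise disjoint two-element subsets of the $\phi(i)$'s that jointly cover $\phi(N)$, one has $|\zje M|=2|M|$ for $M\pdm\vre(N)$ and $\zje(\vre(N)\sm\{m\})=\phi(N)\sm m$ for a single block $m$. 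Because $h_\phi$ equals cardinality on subsets of one $\phi(i)$, we get $h_\phi(m)=2$ for every block $m$ and $h_\phi$ is unit on singletons; in particular $h_{\vre}$ is normalized and $h_{\vre}(\{m\})=h_\phi(m)=2$.

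Next I would put the selfduality of $h_\phi$ into a convenient closed form. As $h_\phi$ is normalized, tight and unit on singletons, the simplified duality formula reads $(h_\phi)'(K)=h_\phi(\phi(N)\sm K)-h_\phi(\phi(N))+|K|$, so $h_\phi=(h_\phi)'$ amounts to
\[
   h_\phi(\phi(N)\sm K)=h_\phi(K)+h_\phi(\phi(N))-|K|\,,\qquad K\pdm\phi(N)\,.
\]
Applying this with $K=m$ a block gives $h_{\vre}(\vre(N)\sm\{m\})=h_\phi(\phi(N)\sm m)=h_\phi(m)+h_\phi(\phi(N))-2=h_\phi(\phi(N))=h_{\vre}(\vre(N))$, so $h_{\vre}$ is tight. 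The one step requiring care is exactly this reduction: rather than evaluating the defining minimum of $h_\phi(\phi(N)\sm m)$ directly, it is much cleaner to push the two missing points back across the selfduality identity and use $h_\phi(m)=|m|$; I expect this to be the only nontrivial move in the proof.

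Finally, $h_{\vre}$ being now normalized and tight, the simplified duality formula applies to it as well, so with $h_{\vre}(\{m\})=2$,
\[
   (h_{\vre})'(M)=h_{\vre}(\vre(N)\sm M)-h_{\vre}(\vre(N))+2|M|=h_\phi(\phi(N)\sm\zje M)-h_\phi(\phi(N))+2|M|\,.
\]
Substituting the displayed identity for $h_\phi$ with $K=\zje M$ and using $|\zje M|=2|M|$ turns the right-hand side into $h_\phi(\zje M)=h_{\vre}(M)$, so $h_{\vre}$ is selfdual; integrality is immediate since $h_{\vre}$ takes only values of $h_\phi$. (One could instead argue via Lemma~\ref{L:polyq.to.polym} and Theorem~\ref{T:tamaspat}, writing $e=h\chee$ and recognizing $(h_{\vre})\chee$ as a free expansion of $(N,e)$; but this would rely on Theorem~\ref{T:expan.seldual.polyquant} to know that expansion is a polyquantoid, and since the present lemma is meant to feed the alternative proof of that theorem, the self-contained computation above is preferable.)
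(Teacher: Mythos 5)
Your proof is correct and follows essentially the paper's route: reduce to Lemma~\ref{L:exp.polymat}, use the simplified duality formula together with $h_{\vre}(M)=h_\phi(\zje M)$, $h_{\vre}(\{m\})=2$ and the fact that the blocks partition $\phi(N)$, so that selfduality of $h_\phi$ yields $(h_{\vre})'=h_{\vre}$. The only (harmless) deviation is in the tightness step, where you invoke the selfduality identity for $h_\phi$ with $K=m$ instead of the paper's direct argument from tightness of $h$ (namely $h_\phi(\phi(N)\sm m)\geq h_\phi(\phi(N\sm i))=h_\phi(\phi(N))$ by monotonicity); both are valid.
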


\begin{proof}
 Let $(N,h)$ satisfy the assumptions. Keeping the notation of the proof
 of Lemma~\ref{L:exp.polymat}, for $m\in\vre(N)$ there exists unique $i\in N$
 such that $m\pdm\phi(i)$. Since $h$ is tight $h_\phi(\phi(N\sm i))$ equals
 $h_\phi(\phi(N))$. Hence,
 \[
    h_{\vre}(\vre(N)\sm\{m\})=h_\phi(\phi(N)\sm m)\geq h_\phi(\phi(N\sm i))
                             =h_\phi(\phi(N))=h_{\vre}(\vre(N))
 \]
 In turn, $h_{\vre}$ is tight.

 By Lemma~\ref{L:exp.polymat}, $(\phi(N),h_\phi)$ is selfdual. Hence, for $M\pdm\vre(N)$
 \[\begin{split}
    (h_{\vre})'(M)&=h_{\vre}(\vre(N)\sm M)-h_{\vre}(\vre(N)) +\sum_{m\in M} h_{\vre}(\{m\})\\
                  &=h_{\phi}(\phi(N)\sm \zje M)-h_{\phi}(\phi(N)) +\sum_{k\in\mbox{$\scriptstyle\bigcup$} M} h_{\phi}(k)
                  =h_{\phi}(\zje M)=h_{\vre}(M)
 \end{split}\]
 using that $ h_{\vre}(\{m\})=2= h_{\phi}(k)+h_{\phi}(\ell)$ where $m=\{k,\ell\}$.
\end{proof}

\section{Discussion\label{S:disc}}

 The polymatroids \cite{Lov,Fujibook,Nar} have been studied for decades
 and history of the matroid theory \cite{Oxley} is even longer. The duality
 defined in Section~\ref{S:dual} is in general different from known ones, as
 those in~\cite{Nar,Oxley,Whittle}, since it conserves values on singletons,
 see Lemma~\ref{L:dual}\emph{(ii)}. For matroids without loops and coloops,
 the duality coincides with the usual one~\cite[2.1.9]{Oxley}. Functions called
 above selfdual are in literature also termed identically self\-dual.
 Tightness is a notion suitable for this work but not used elsewhere.
 A matroid is tight if and only if it has no coloop.

 The problem which polymatroid is entropic is of interest for
 information-theoretical approaches to networks and cryptography, and
 beyond, for references see e.g.\ \cite{ZhY.ineq,M.infinf,M.twocon}. Its
 quantum version, asking which polyquantoid is entropic, has also attracted
 considerable attention \cite{Pippenger,LindenWinter:new,Ca.Li.Wi}.

 Ideal secret sharing schemes were investigated first in a combinatorial
 setting \cite{BriDa}. Theorem~\ref{T:polymatroidsss} is a consequence
 of \cite[Theorem 2]{BlaKa}, building on \cite[Theorem~1]{BriDa}. The presented
 proof is based on the approach of~\cite{BlaKa}. Quantum secret sharing schemes
 go back to \cite{CleveGottLo,Hi.Bu.Be,Gott.qsss}. Ideal sharing and matroids
 were discussed recently in \cite{SaRu,Sa.sympl}. Theorem~\ref{T:polyquantoidsss}
 solves a question related to \cite[Fig.~2]{SaRu}. It implies that the access
 structure of any ideal quantum secret sharing scheme must be generated
 by circuits of a tight selfdual matroid.

 Free expansions were proposed independently by several researchers, see
 \cite{Helg,McDiarmid.Rado,Nguyen}. If an entropic integer polymatroid
 expands to a matroid then the latter is the limit of entropic polymatroids
 \cite[Theorem~4]{M.twocon}. The quantum analogue of this assertion is open.

\section*{Acknowledgement\label{S:ack}}

 The author wants to express sincere thanks to Mary Beth Ruskai
 and Andreas Winter for numerous inspiring discussions and
 kind hospitality during visits. Discussions with Oriol Farr\`as
 Ventura on quantum secret sharing are acknowledged.


\begin{thebibliography}{99}
%
\bibitem{BlaKa} G.R.~Blakley and G.A.~Kabatianski (1997)
    Generalized ideal secret-sharing schemes and matroids.
    \emph{Problems of Inf.\ Transmission} {\bf33}  277--284.
%
\bibitem{BriDa} E.F.~Brickell and D.M.~Davenport (1991)
    On the classification of ideal secret-sharing schemes.
    \emph{J.~Cryptology} {\bf 4} 123--134.
%
\bibitem{Ca.Li.Wi} J.~Cadney, N.~Linden and A.~Winter (2012)
    Infinitely many constrained inequalities for the von Neumann entropy.
 	\emph{IEEE Trans.~Inf.~Th.\ } {\bf58} 3657--3663.
%
\bibitem{CleveGottLo}
    R.~Cleve, D.~Gottesman and H.-K.~Lo (1999) How to share a quantum secret.
    \emph{Ph.\ Review Letters} {\bf83} 648--651.
%
\bibitem{Fujibook} S.~Fujishige (1991) \emph{Submodular Functions and Optimization.}
    North-Holland, Amsterdam.
%
\bibitem{Gott.qsss} D.~Gottesman (2000) Theory of quantum secret sharing.
    \emph{Phys.\ Rev.\ A} {\bf61} 042311.
%
\bibitem{Hi.Bu.Be} M.~Hillery, V.~Bu\v{z}ek and A.~Berthiaume (1999)
    Quantum secret sharing. \emph{Phys.\ Rev.\ A} {\bf 59} 1829--1834.
%
\bibitem{Helg} T.~Helgason (1974) Aspects of the theory of hypermatroids. In: \emph{Hypergraph Seminar}
    (C.~Berge and D.K.~Ray-Chaudhuri, eds.), Lecture Notes in Mathematics {\bf 411}, Springer-Verlag, Berlin, 191--214.
%
\bibitem{LindenWinter:new} N.~Linden and A.~Winter (2005) A new inequality for the von Neumann
    entropy. \emph{Commun.\ Math.\ Phys.\ } {\bf 259} 129--138.
%
\bibitem{Lov} L.~Lov\'asz (1982) Submodular functions and convexity. In: \emph{Mathematical
    Programming -- The State of the~Art} (A.~Bachem, M.~Gr\"otchel and B.~Korte, eds.),
    Springer-Verlag, Berlin, 234--257.
%
\bibitem{M.infinf}
   F.~Mat\'{u}\v{s} (2007) Infinitely many information inequalities.
   \emph{Proceedings ISIT 2007}, Nice, France, 41--44.
%
\bibitem{M.twocon} F.~Mat\'{u}\v{s} (2007) Two constructions on limits of entropy functions.
    \emph{IEEE Trans.\ Inf.\ Th.\ } {\bf 53} 320--330.
%
\bibitem{McDiarmid.Rado} C.J.H.\ McDiarmid (1975) Rado's theorem for polymatroids.
    \emph{Math.\ Proc.\ Cambridge Phil.\ Soc.\ } {\bf 78} 263--281.
%
\bibitem{Nar} H.~Narayan (1997) \emph{Submodular Functions and Electrical Networks.}  Elsevier, Amsterdam.
%
\bibitem{Nguyen} H.Q.~Nguyen (1978) Semimodular functions and combinatorial geometries.
    \emph{Trans.\ AMS} {\bf 238} 355--383.
%
\bibitem{Oxley} J.G.~Oxley (1992) \emph{Matroid Theory.} Oxford University Press, Oxford, New York, Tokyo.
%
\bibitem{Pippenger} N.~Pippenger (2003) The inequalities of quantum information theory.
    \emph{IEEE Trans.\ Inf.\ Th.\ } {\bf 49} 773--789.
%
\bibitem{SaRu} P.~Sarvepalli and R.~Raussendorf (2010)
    Matroids and quantum-secret-sharing schemes. \emph{Physical Review A} {\bf81} 052333.
%
\bibitem{Sa.sympl} P.~Sarvepalli (2011) Quantum codes and symplectic matroids.
    (arXiv:1104.1171v1 [quant-ph])
%
\bibitem{Whittle} G.~Whittle (1992) Duality in polymatroids and set functions.
    \emph{Combinatorics, Probability and Computing} {\bf1} 275--280.
%
\bibitem{ZhY.ineq} Z.~Zhang and R.W.~Yeung (1998) On characterization of entropy function
    via information inequalities. \emph{IEEE Trans.\ Inf.\ Th.\ } {\bf 44} 1440--1452.
\end{thebibliography}
\end{document}